\newcommand*{\p}{\mathbb{P}}
\newcites{SM}{References}
\newcommand{\norm}[1]{\left\lVert#1\right\rVert}
\newtheorem{Lemma}{Lemma 1}
\date{}
\definecolor{darkblue}{rgb}{.1, 0.1,.8}
\definecolor{darkgreen}{rgb}{0,0.8,0.2}
\definecolor{darkred}{rgb}{.8, .1,.1}
\definecolor{violet}{RGB}{148,0,211}
\newcommand*{\N}{\mathbb{N}}
\begin{document}
\title*{Uniform confidence bands for joint angles across different fatigue phases}
\author{Patrick Bastian, Rupsa Basu, Holger Dette}
\institute{Patrick Bastian \at Ruhr Universität Bochum, Universitätsstraße 150, 44801 Bochum, \email{patrick.bastian@rub.de}
\and Rupsa Basu \at Universität zu Köln, Universitätstraße 24, 50931 Köln, \email{rbasu@uni-koeln.de}
\and Holger Dette \at Ruhr Universität Bochum, Universitätsstraße 150, 44801 Bochum \email{holger.dette@rub.de}}
%
%
\index{Surname, N.} 
\index{Surname, N.} 

\maketitle

\abstract{We develop uniform  confidence bands for the mean function of stationary time series as a post-hoc analysis of multiple change point detection in functional time series.
In particular, the methodology in this work provides bands for those segments where the jump size exceeds a certain threshold $\Delta$.
In \cite{bastian2024multiplechangepointdetection} such exceedences of $\Delta$ were related to fatigue states of a running athlete. The extension to confidence bands stems  from an interest in understanding the range of motion (ROM) of lower-extremity joints of  running athletes under fatiguing conditions.  From a biomechanical perspective, ROM serves as a proxy for joint flexibility under varying fatigue states, offering individualized insights into potentially problematic movement patterns. The new  methodology provides a valuable tool for understanding the dynamic behavior of joint motion and its relationship to fatigue.}

\section{Introduction}
Fatigue detection in biomechanical data of the lower extremity joint angles may be recast as change point detection in a functional time series, as seen previously in \cite{basu2023fatiguedetectionsequentialtesting} and \cite{bastian2024multiplechangepointdetection}.  Methods in these references  provide statistical methodologies for segmenting data based on the tiredness of athletes, ultimately providing phases where the athlete may be considered to be rested, pre-fatigued or fully exhausted. 
In this paper will explicate some of the consequences of theses results by studying the fatigue attributed impact on the range of motion (ROM) within these phases. To achieve this, we construct confidence bands for the mean curves corresponding to each fatigue state.

The dataset motivating our work comprises of measurements of lower-extremity joint angles collected from running athletes subjected to a stress-inducing protocol specifically designed to simulate fatiguing conditions. Under fatigue, the body is expected to make adjustments to maintain endurance, resulting in changes in joint angles. In  Figure \ref{fig:exampleFig01}, we provide a snapshot of knee angle data from a single runner. Running, being a repetitive activity, naturally exhibits cycles, also called strides, which are clearly visible. In the  upper part of the figure we see $8$ strides, but the full data set consists of more than $1200$ strides as shown in the lower part of Figure \ref{fig:exampleFig01}.   

\begin{figure}[t]
\hspace{-1.7cm}
    \includegraphics[width=1.25\linewidth, height = 4cm]{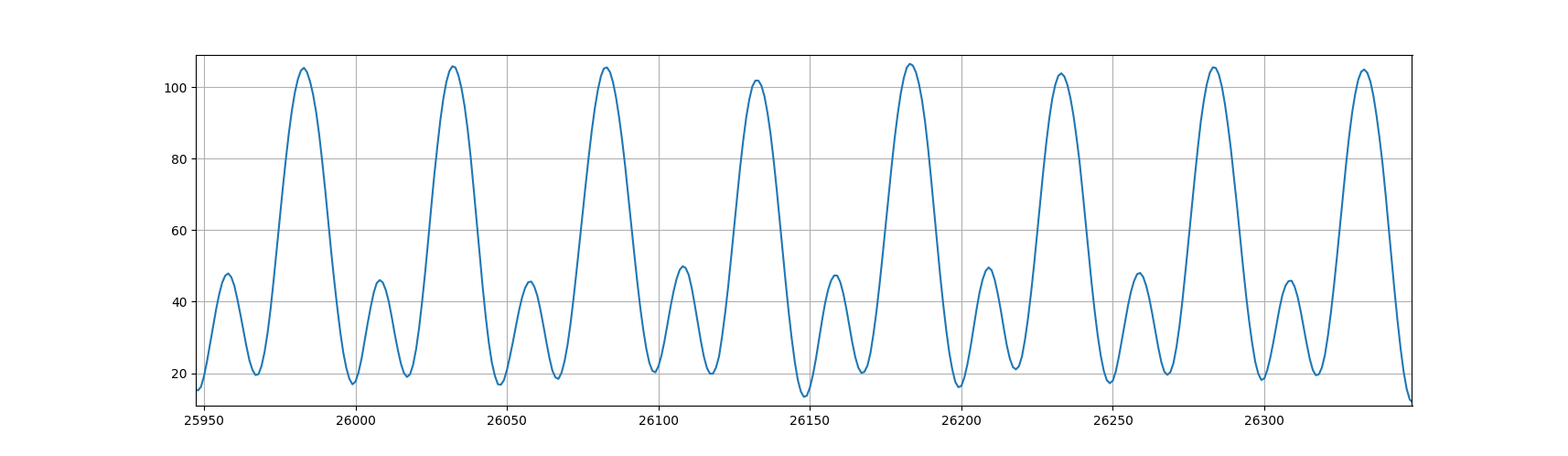}
    \vspace{-.5cm}
    \hspace{-0.21cm}
    \includegraphics[width = 1.0125\linewidth, height = 4cm]{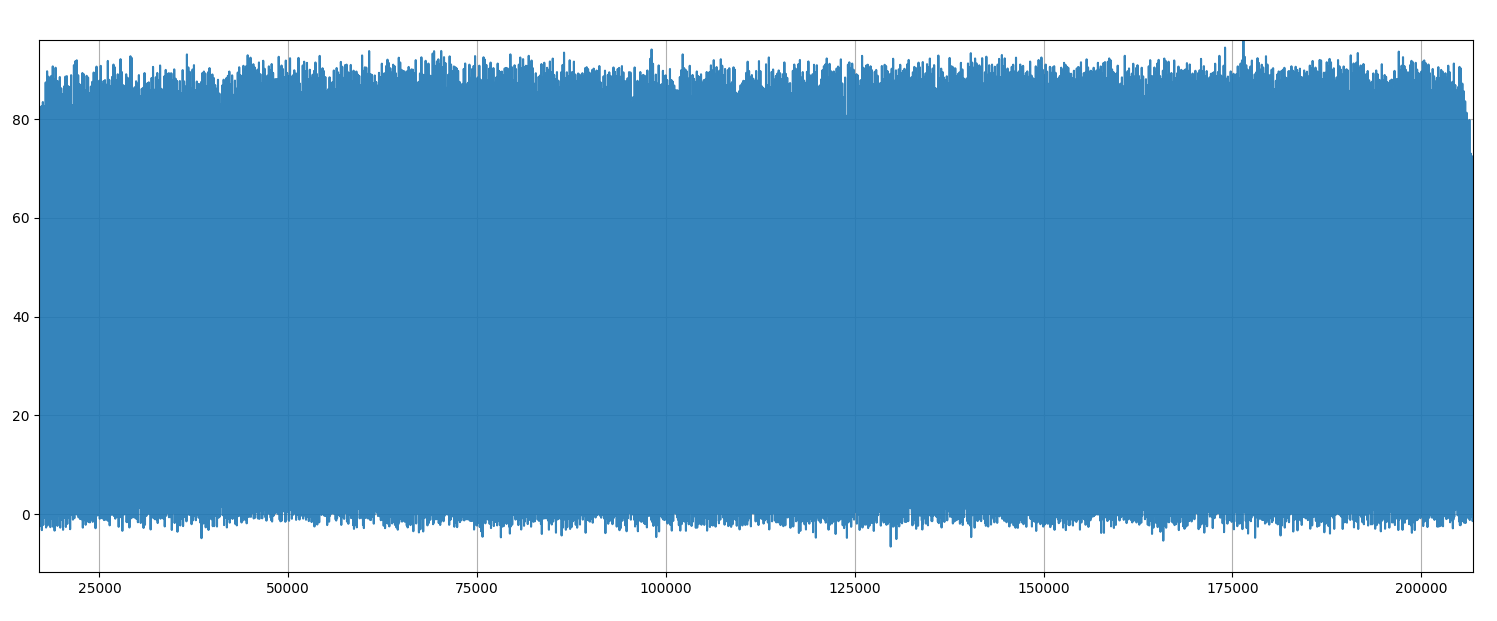}
    \vspace{-.1cm}
        \caption{Functional time series of  knee angle data from runners. The upper panel  shows $8$ cycles of the data starting and ending with the contact of the same foot with the ground. The lower panel shows the full functional time series of knee angle data from the above runner with about $1200$ cycles. Individual cycles not visible due to being condensed in the plot.}
    \label{fig:exampleFig01}
\end{figure}

Detecting fatigue related adaptations of the body can be considered as change detection problem for functional time series data. In this work, supposing that the locations of fatigue haven been identified, we address the problem of understanding the movement of the runner within the stable periods of no change by developing confidence bands for the mean function of the time series between two changes. 
Such bands are of interest to annotate the individual 
ROM of the joint angles of a particular running athlete, thereby obtaining a better understanding of the ROM during rested, pre-fatigue and fatigued stages of the run. In particular they serve as a powerful visual tool for the biomechanical scientist to establish whether the athlete under study is somehow undergoing movement that is abnormal as compared to her standard course of motion.
 Ultimately, this knowledge is beneficial for ensuring that injuries resulting from harmful movements during fatigued phases is curtailed timely, thereby ensuring enhanced longevity of the runners  performance. 



\textbf{Summary of approach} 
Consider that for a runner the lower extremity joint angle data from the knee \footnote{Similar analysis from the hip and ankle- angle data may be pursued. For sake of brevity, we omit the study of these examples in this paper. } obtained over a course of the run are  simply (discrete) realizations of an underlying function of time. Due to the cyclic nature of this data we have a long functional time series with each such function beginning and ending with  the subsequent contact of the same foot with the ground. Each of these cycles, or strides in biomechanical terms, may be represented mathematically as an observation taking values in the space of continuous functions, say $ C[0,1]$, leading to a time series, say $X_{n,1},...,X_{n,n} \in C[0,1]$, where $n$ is the total number of strides. We model these observations in the form, 
\begin{align}
    X_{n,j}(t)=\mu_{n,j}(t)+\epsilon_{n,j}(t) \quad j=1,...,n;
\end{align}
where $\{\epsilon_{n,j}\}_{j=1,\ldots n} $ is triangular array  whose rows are stationary and $\mu_{n,j}=\mathbb{E}[X_j]\in C[0,1]$ denotes the underlying mean function of $X_{n,j}$. Notice here that the index $j$  denotes the sequential nature of the cycles i.e., $j=1$ is the joint angle data from the first stride followed by $j=2$ and so on. Under this setup, detecting changes with statistical guarantees in the mean function $\mu_{n,j}$ then corresponds in biomechanical terms to detecting the onset of different fatigue states over the locations $j$ in the average movement of the runner. 
In this work, we assume that $\mu_{n,j}$ is piecewise constant (in $j$) and that there are $m$ (rescaled) locations of change denoted by  $ 1 < \lfloor s_1 n \rfloor <  \ldots <  \lfloor s_m n \rfloor$, where $0 < s_1 < \ldots  < s_m <1$,  where we define $s_0 = 0 $ and $s_{m+1} = 1.$. More precisely, we assume that the mean function on the segment between
$j =\lfloor ns_{i} \rfloor$ and $j= \lfloor s_{i+1} n\rfloor$ 
is given by 
\begin{align}
    \mu_i(t):=\mu_{n,\lfloor s_{i}n \rfloor}(t) ~~, ~~i= \lfloor ns_{i} \rfloor \ldots  \lfloor s_{i+1} n\rfloor-1 .
\end{align}
 We are now interested in characterizing the range of motion in those intervals $[s_{i}, s_{i+1}]$ where the change at time $s_{i}$ corresponds to a transition from one fatigue state to another.

This poses three problems. First we need to estimate the change points in such a functional time series, secondly we need to determine which of these change locations are relevant, i.e. correspond to transitions of the fatigue state. Thirdly,  we then need to construct confidence bands for the associated segments. The first and second problems have been  investigated in  detail by  \cite{bastian2024multiplechangepointdetection}. In this note we will solve the third problem, i.e.  we construct upper and lower confidence bands, respectively given by functions $t \to \hat \mu_i^+(t),$ $ t \to \hat \mu_i^-(t)$ such that,

\begin{align}
 \label{p1}
   \lim_{n \to \infty}  \p\Big( \cap_{i \in I}\big\{
  \text{for all } t \in [0,1]:~ \hat \mu_i^-(t) \leq \mu_i(t) \leq \hat \mu_i^+(t) \big \}\Big)=1-\alpha, 
\end{align}
where  
\begin{align}
\label{det1}
   I \coloneqq \{0\}\cup \big \{1 \leq i \leq {m}  ~\big | ~ \norm{\mu_{i}-\mu_{i-1}}_\infty > \Delta \big \} 
\end{align}
is the set of changes  of size larger than $\Delta   >0$. Here 
we include $0$ for notational convenience of later statements.  $\Delta$ denotes a threshold that can be chosen either automatically or by the practitioner and it serves to eliminate subtle changes in the data which may be due to external conditions like pedestrians, obstacles, or curvature on the path of the athlete. The main contribution of this note is the construction of a bootstrap procedure that allows us to compute the lower and upper confidence functions $\hat \mu_i^\pm, i \in I$ that fulfill \eqref{p1}, provided  that  estimators of the change points $s_i$ and the set $I$ are available. 

With the additional contribution of this note we transform the biomechanical problem of studying the impact of fatigue on the range of motion into a three-step statistical problem of \textit{(i)} detecting locations of change $s_i$ in a functional time series, \textit{(ii)} demarcating those changes $s_i,  i \in I$ corresponding to relevant changes in the mean function, followed by \textit{(iii)} constructing simultaneous confidence bands for the mean functions  within the segments $[s_{i}, s_{i+1}], i \in I.$

\section{Confidence bands for multiple means}

Suppose there are a total of $m$ change locations  $s_1,...,s_m$ whereof we have $k \leq m$  locations of relevant change $s_{i_l}, l=1,...,k$ so that  the set $I$ in \eqref{det1} is given by $I=
\{0,i_1,...,i_k\}$. We are interested in the $k+1$ segments of the functional time series given by $[0, s_{i_1}], [s_{i_1},s_{i_1+1}],..., [s_{i_k}, s_{i_k+1}]$ and their corresponding mean curves $\mu_0, \mu_{i_1},...,\mu_{i_k}$. Therefore, provided that the locations of the change  points $s_i, i=1, \dots, m$ and the indices $\{i_1,...,i_k\}$ are known or can be estimated, the problem in \eqref{p1} may be rephrased as obtaining confidence bands for the multivariate vector given by, 
\begin{align}
    \textbf{$\mu$}=(\mu_0, \mu_{i_1}, ..., \mu_{i_k})^\top \in (C[0,1])^{k+1}~.
\end{align}
The case $k=0$ corresponding to no change point has already been treated in \cite{Aue2020}. For the extension to the situation, where there are (relevant) change points, we require estimators for the the locations  $s_1,...,s_m$ of the changes, for their number $m$ and for the set $I$. For this purpose we denote by $\hat s_i, i=1,...,\hat m$ the estimators from Algorithm 1 of \cite{bastian2024multiplechangepointdetection}, where $\hat m$ is the estimated number of change points and additionally let $\hat I$ denote the estimator of the set $I$  of the relevant chnage points that can be obtained by Algorithm 2 in the same paper. Precise consistency statements for these estimates may be obtained in Theorem 4.1 and 4.2 
of  the given reference.

To obtain confidence bands we need to derive the asymptotic distribution of a suitable statistics such as
\begin{align}
\label{p2}
    T_n=\max_{i_l \in \hat I}\sqrt{\hat n_{i_l}}\norm{\frac{\hat \mu_{i_l}-\mu_{i_l}}{\hat \sigma}}_\infty~,
\end{align}
where $\hat n_i=\lfloor n(\hat s_{i+1}-\hat s_{i}) \rfloor$, $\hat \mu_i= n_i^{-1}\sum_{k=\lfloor n\hat s_{i}\rfloor}^{\lfloor n\hat s_{i+1}\rfloor-1}X_{n,k}$ and
$\hat \sigma^2$ is an estimator of the 
long run variance function
$$
\sigma^2 = \sum_{j\in \mathbb{Z}} \mathbb{E} [\epsilon_0\epsilon_j] ,
$$  which we define in equation \eqref{lrv} further below. One can show that it holds that $T_n \overset{d}{\rightarrow}T$ for some random variable $T$ and we can then obtain the required confidence bands by defining,
\begin{align}
    \mu^\pm_{i} (t)=\hat \mu_i (t) \pm \hat \sigma(t) q_{1-\alpha/2}/\sqrt{n}, \quad i \in \hat I,
\end{align}
where $q_{1-\alpha}$ is the $(1-\alpha)$-quantile of $T$. The distribution of $T$ however depends not only on the set $I$ but also on the covariance structure of the functional time series and as such needs to also be estimated from the data. For this purpose, we will employ a multiplier bootstrap procedure. 

We first construct a bootstrap version of $T_n$. To that end let $L=L(n)$ denote a block length and define the random variables
\begin{align}
    Y_{n,j}&=X_{n,j}-\hat \mu_i, \quad \quad \hat s_{i}\leq j/n < \hat s_{i+1}\\
    \hat \mu_i^*&=\hat n_i^{-1}\sum_{j=\lfloor n\hat s_{i}\rfloor}^{\lfloor n\hat s_{i+1}\rfloor-1}\nu_j\Big(L^{-1/2}\sum_{l=0}^{L-1} Y_{n,j+l}\Big)
\end{align}
where $\{ \nu_j\}_{j \in \N}$ is an iid sequence of standard normal random variables. The bootstrap test statistic is then given by
\begin{align}
    T_n^*=\max_{i \in \hat I}\sqrt{\hat n_i}\norm{\hat \mu_i^*/\hat \sigma}_\infty.
\end{align}
From the bootstrap procedure, we obtain statistics $T_{n}^{*, (1)}, \dots, T_{n}^{*, (R)}$ and denote the $(1-\alpha)$-quantile by $q_{1-\alpha}^*$. For convenience we summarize the above procedure in \cref{Alg1}. Finally we obtain the desired confidence bands by defining,
\begin{align}
\label{p3}
    \hat \mu^\pm_{i}=\hat \mu_i \pm \hat \sigma(t)\hat q^*_{1-\alpha/2}/\sqrt{\hat n_i}~, \quad i \in \hat I.
\end{align}

\begin{algorithm}[H]
	\begin{algorithmic}[1]
	\State \textbf{Compute} the number $\hat m$ and all (ordered) change points $\hat{S} = \{\hat{s}_1, \dots, \hat{s}_{\hat k} \}$ using BINSEG($1,n, \xi_n$) from \cite{bastian2024multiplechangepointdetection}
    \State \textbf{Compute} at significance level $\beta$ the set of relevant change points $\hat I$ by Algorithm 2 from \cite{bastian2024multiplechangepointdetection}
    \State \textbf{Compute} the means $\hat \mu_i= \hat n_i^{-1}\sum_{j=\lfloor n\hat s_{i}\rfloor}^{\lfloor n\hat s_{i+1}\rfloor-1}X_{n,j}$ for $i \in \hat I$
     \State \textbf{Compute} the long run variance $\hat \sigma^2$ as given in \eqref{lrv}
    \State \textbf{Compute} $Y_{n,j}=X_{n,j}-\hat \mu_i$ for  $\hat s_{i}\leq j/n < \hat s_{i+1}$
		\State \textbf{Fix} { block length $L$,  number of bootstrap replications $R$}  
         \For{ $r = 1, \dots, R$}
         \State \textbf{Compute}  $\hat \mu_i^*=\hat n_i^{-1}\sum_{j=\lfloor n\hat s_{i}\rfloor}^{\lfloor n\hat s_{i+1}\rfloor}\nu_j\Big(L^{-1/2}\sum_{l=0}^{L-1} Y_{n,j+l}\Big) $  for $i \in \hat I$ 
         \State \textbf{Compute}         
         $\hat T_{n}^* = \max_{i \in \hat I}\sqrt{\hat n_i}\norm{\hat \mu_i^*/\hat \sigma}_\infty$
         \EndFor	
	\State \textbf{Compute  }  $\hat q^*_{1-\alpha} \leftarrow $ 
 as the   empirical $(1-\alpha)$-quantile of bootstrap sample  $\hat T_{n}^{*,(1)} . \ldots , \hat T_{n}^{*, (R)}. $ 	
	\end{algorithmic}
 \caption{Bootstrapping Quantiles}
 \label{Alg1}
\end{algorithm}

\begin{theorem}
\label{t1}
    Assume that Condition (A1)-(A4) in  \cite{dette2022detecting} hold. Let $\hat q^*_{1-\alpha}$ denote the $(1-\alpha)$ quantile obtained by Algorithm \ref{Alg1} where $\hat I$ is calculated at level $\beta$. Further assume that $\sigma(t)^2>0$ for all $t \in [0,1]$ and that $\hat \sigma^2$ is a consistent estimator for $\sigma^2$. Then it holds, with $\hat \mu_i^\pm$ given as in \eqref{p3}, that
    \begin{align}
         \liminf_{n \to \infty} \p\Big(
         \cap_{i \in I}\big\{
  \text{\rm for all } t \in [0,1]:~ \hat \mu_i^-(t) \leq \mu_i(t) \leq \hat \mu_i^+(t) \big \}
         \Big)\geq 1-\alpha-\beta .
    \end{align}
    In the case that $\norm{\mu_{i+1}-\mu_i}_\infty \neq \Delta$ for $1 \leq i \leq m$ we can strengthen this result to to
    \begin{align}
         \lim_{n \to \infty} \p\Big(
         \cap_{i \in I}\big\{
  \text{\rm for all } t \in [0,1]:~ \hat \mu_i^-(t) \leq \mu_i(t) \leq \hat \mu_i^+(t) \big \}
         \Big)= 1-\alpha.
    \end{align}
\end{theorem}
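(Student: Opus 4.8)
The plan is to reduce the data-driven quantity entering \eqref{p3} to an ``oracle'' statistic built on the true change points $s_1,\dots,s_m$ and the true relevant set $I$, to derive a functional central limit theorem for it, to show that the block--multiplier bootstrap of \cref{Alg1} reproduces its limiting law conditionally on the data, and to read off the two coverage statements from the resulting convergence of quantiles. I would work throughout on the event
\[
  E_n \;=\; \big\{\, I\subseteq \hat I \,\big\}\;\cap\;\Big\{\, \max_{1\le i\le m}\lvert \hat s_i-s_i\rvert \le \rho_n/n \,\Big\},
\]
where $\rho_n=o(\sqrt n)$ is the localization rate supplied by the change-point analysis of \cite{bastian2024multiplechangepointdetection}; its Theorems~4.1--4.2 give $\liminf_{n\to\infty}\p(E_n)\ge 1-\beta$, and, when no $\norm{\mu_i-\mu_{i-1}}_\infty$ equals $\Delta$, even $\p\big(E_n\cap\{\hat I=I\}\big)\to 1$. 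On $E_n$ every band $\hat\mu_i^\pm$, $i\in I$, is defined and the event in the theorem coincides with $\{\hat T_n\le \hat q^*_{1-\alpha/2}\}$ for $\hat T_n:=\max_{i\in I}\sqrt{\hat n_i}\,\norm{(\hat\mu_i-\mu_i)/\hat\sigma}_\infty$. Since each relevant segment has length of order $n$ and $\mu_{n,j}$ is piecewise constant in $j$, replacing $\hat s_i$ by $s_i$ perturbs $\hat\mu_i$ only by $O_P(\rho_n/n)=o_P(n^{-1/2})$ (the $O_P(\rho_n)$ swapped observations have means differing by the bounded jump, plus their centred fluctuations) while $\hat n_i/n_i\to1$; hence on $E_n$, $\hat T_n=\tilde T_n+o_P(1)$ with $\tilde T_n:=\max_{i\in I}\sqrt{n_i}\,\norm{(\bar\mu_i-\mu_i)/\hat\sigma}_\infty$ referring to the true segments. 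This oracle reduction is the routine part.

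Next I would identify the limit. On the $i$-th true segment the $X_{n,j}$ form a stationary functional time series with mean $\mu_i$, so Conditions (A1)--(A4) of \cite{dette2022detecting} yield $\sqrt{n_i}(\bar\mu_i-\mu_i)\overset{d}{\to}\mathbb{B}_i$ in $C[0,1]$, a centred Gaussian process with covariance kernel equal to the long-run covariance $\Gamma(s,t)=\sum_{h\in\mathbb{Z}}\mathbb{E}[\epsilon_0(s)\epsilon_h(t)]$ (so $\Gamma(t,t)=\sigma^2(t)$, cf.\ \eqref{lrv}); as the $\mathbb{B}_i$ come from disjoint stretches of a weakly dependent series, the coordinates of the joint limit of $(\sqrt{n_i}(\bar\mu_i-\mu_i))_{i\in I}$ are independent. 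Using $\sup_t\lvert\hat\sigma(t)-\sigma(t)\rvert\overset{P}{\to}0$, $\sigma^2>0$, and the continuous mapping theorem, $\tilde T_n\overset{d}{\to}T:=\max_{i\in I}\sup_{t\in[0,1]}\lvert\mathbb{B}_i(t)\rvert/\sigma(t)$. Because $\sigma^2>0$, $T$ is a non-degenerate supremum of Gaussian processes, so its distribution function is continuous and strictly increasing on $(0,\infty)$; thus $q_{1-\gamma}(T)$ is well defined and a continuity point of the law of $T$ for every $\gamma\in(0,1)$.

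The step I expect to be the main obstacle is establishing conditional validity of the bootstrap in $C[0,1]$. Given the data, $\sqrt{\hat n_i}\,\hat\mu_i^*$ is a centred Gaussian element of $C[0,1]$ with conditional covariance kernel
\[
  \widehat\Gamma_i(s,t) \;=\; \hat n_i^{-1}\!\!\sum_{j=\floor{n\hat s_i}}^{\floor{n\hat s_{i+1}}-1}\Big(L^{-1/2}\!\sum_{l=0}^{L-1}Y_{n,j+l}(s)\Big)\Big(L^{-1/2}\!\sum_{l=0}^{L-1}Y_{n,j+l}(t)\Big),
\]
and the crux is $\sup_{s,t}\lvert\widehat\Gamma_i(s,t)-\Gamma(s,t)\rvert\overset{P}{\to}0$ for $i\in I$: on $E_n$ the residuals obey $Y_{n,j}=\epsilon_{n,j}+(\mu_i-\hat\mu_i)$ with $\max_{i\in I}\norm{\mu_i-\hat\mu_i}_\infty=o_P(1)$, the $O_P(L+\rho_n)=o_P(n)$ blocks straddling a segment endpoint are asymptotically negligible, and a law of large numbers for block sums of the stationary functional series, together with $L\to\infty$ and $L/n\to0$, drives the $\epsilon$-part to $\Gamma(s,t)$ --- exactly the ingredients underlying the block--multiplier constructions justified in \cite{Aue2020} (one segment) and \cite{dette2022detecting} (under (A1)--(A4)); conditional tightness of $\sqrt{\hat n_i}\,\hat\mu_i^*$ follows from the same moment and chaining estimates. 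Since the multipliers $\nu_j$ used on distinct segments share only $O(L)$ indices, the bootstrap coordinates are asymptotically conditionally independent, so on $E_n\cap\{\hat I=I\}$ the conditional law of $T_n^*$ converges in probability to that of $T$ and $\hat q^*_{1-\gamma}\overset{P}{\to}q_{1-\gamma}(T)$. In general, on $E_n$ we only have $\hat I\supseteq I$, hence $T_n^*\ge\max_{i\in I}\sqrt{\hat n_i}\norm{\hat\mu_i^*/\hat\sigma}_\infty$, whose conditional law still tends to that of $T$; comparing quantiles gives $\hat q^*_{1-\gamma}\ge q_{1-\gamma}(T)-o_P(1)$, which suffices for a lower bound on coverage.

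Finally I would combine the pieces. On $E_n$ the event in the theorem equals $\{\hat T_n\le\hat q^*_{1-\alpha/2}\}$, so $\p(\text{coverage})\ge\p(\hat T_n\le\hat q^*_{1-\alpha/2})-\p(E_n^c)$; inserting $\hat T_n=\tilde T_n+o_P(1)\overset{d}{\to}T$ and $\hat q^*_{1-\alpha/2}\ge q_{1-\alpha/2}(T)-o_P(1)$ and invoking the portmanteau lemma with the continuity of the law of $T$, the first probability converges to the nominal coverage probability of the band while $\limsup_n\p(E_n^c)\le\beta$, which yields $\liminf_n\p(\text{coverage})\ge 1-\alpha-\beta$. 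When no jump equals $\Delta$ we additionally have $\p(\hat I=I)\to1$, so the $\beta$-contribution vanishes, $\hat q^*_{1-\alpha/2}$ converges two-sidedly to $q_{1-\alpha/2}(T)$, and the same estimate together with the matching upper bound $\p(\text{coverage})\le\p(\hat T_n\le\hat q^*_{1-\alpha/2})+\p(\hat I\neq I)$ pins the limit down to the nominal level $1-\alpha$.
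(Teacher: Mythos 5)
Your argument is correct in substance and reaches the same conclusions through the same overall skeleton (oracle reduction via the change-point localization rates, identification of $\hat I$ with $I$, then bootstrap validity plus continuous mapping), but it formalizes the key bootstrap step along a genuinely different route than the paper. The paper's sketch establishes \emph{unconditional joint} weak convergence of $(T_n, T_n^{*,(1)},\dots,T_n^{*,(R)})$ to $(T, T^{(1)},\dots,T^{(R)})$ with independent copies $T^{(r)}$ of $T$, by invoking Theorems 2.2 and 4.3 of \cite{Aue2020} together with the continuous mapping theorem and Lemma~\ref{pl1}; validity of the empirical quantile of the $R$ replicates then follows directly, with no need to discuss conditional laws. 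You instead exploit that, given the data, $\sqrt{\hat n_i}\,\hat\mu_i^*$ is an exactly Gaussian element of $C[0,1]$, prove uniform convergence of its conditional covariance kernel to the long-run covariance plus conditional tightness, deduce convergence in probability of the conditional law of $T_n^*$ to the law of $T$, and conclude $\hat q^*_{1-\gamma}\overset{P}{\to}q_{1-\gamma}(T)$. Your route buys an explicit handle on the first (conservative) statement: the monotonicity argument $T_n^*\ge\max_{i\in I}\sqrt{\hat n_i}\norm{\hat\mu_i^*/\hat\sigma}_\infty$ on the event $I\subseteq\hat I$ cleanly yields the $1-\alpha-\beta$ bound, a point the paper dispatches with ``the first follows by similar arguments.'' The paper's route is better adapted to a fixed, finite number $R$ of bootstrap replications, since the joint-convergence formulation treats the empirical quantile of $R$ draws directly, whereas your quantile-consistency statement implicitly concerns the exact conditional quantile (so a word on $R\to\infty$ or the residual Monte Carlo error would tighten your final step). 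One small imprecision: the multiplier indices $\nu_j$ used on distinct segments are in fact disjoint (only the data blocks $Y_{n,j+l}$ overlap near segment boundaries), so the bootstrap coordinates are exactly, not just asymptotically, conditionally independent; this only strengthens your argument.
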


\begin{proof}
We provide only a sketch of the proof for the sake of brevity, focusing on the second statement. The first follows by similar arguments.  The proof proceeds in three steps, we first replace the change point estimators by their true values, then identify the relevant changes and finally employ weak invariance principles to establish the necessary asymptotic statements that yield validity of the bootstrap. In a bit more detail we 
\begin{enumerate}
    \item Use Theorem 4.1 from \cite{bastian2024multiplechangepointdetection} to replace the estimated change points by the true change points in the definitions of $\hat n_i$ and $\hat \mu_i$.
    \item Use Theorem 4.2 from \cite{bastian2024multiplechangepointdetection} to show that $\hat I=I$ with probability $1-o(1)$.
    \item Observe that multiplication by $\sigma$, taking component wise maxima and the supremum norm are continuous functions and that by Lemma \ref{pl1} (below) we may replace $\hat \sigma$ by $\sigma$. Then use Theorem 2.2 and 4.3  from \cite{Aue2020} in combination with the continuous mapping theorem to show that $T_n$ and finitely many copies $T^*_{n,i}$ of $T_n^*$ converge jointly in distribution, more precisely
    \begin{align}
        (T_n,T_{n}^{*, (1)},...,T_{n}^{*, (R)})\overset{d}{\rightarrow}(T,T^{(1)},...,T^{(R)}),
    \end{align}
    where $T_i$ are independent copies of $T$.     
\end{enumerate}
Combining these steps immediately yields validity of the bootstrap scheme.
    
\end{proof}

Now we provide the precise definition of the long run variance estimator $\hat \sigma^2$ and derive its consistency. Let
\begin{align}
\label{lrv}
    \hat \sigma^2=\sum_{l=-c}^c\hat{\sigma}^2_lK(l/c)
\end{align}
for some bandwidth parameter $c\rightarrow \infty$, kernel $K$ and where
\begin{align}
    \hat{\sigma}^2_{l} =
\begin{cases}
\frac{1}{n} \sum_{j=1}^{n-l} \big(X_j- \hat \mu^{(j)}\big)\big(X_{j+\ell} - \hat \mu^{(j)}\big), & l \geq 0, \\
\frac{1}{n} \sum_{j=1-l}^{n} \big(X_j- \hat \mu^{(j)}\big)\big(X_{j+\ell} - \hat \mu^{(j)}\big), & l < 0,
\end{cases}
\end{align}
are estimators for the lag variances and
$\hat \mu^{(j)}=\hat \mu_i$ for $\lfloor \hat s_i \rfloor \leq j < \lfloor \hat s_{i+1} \rfloor$. A symmetric function $K$ is called kernel whenever $K(0)=1, K(1)=0$ and $K(x)=0$ for $|x|>1$. \\
\begin{Lemma} 
\label{pl1}
    Assume that Condition (A1)-(A4) in  \cite{dette2022detecting} hold, then the long run variance estimator defined in \eqref{lrv} is consistent, i.e.  $ \norm{\hat \sigma^2-\sigma^2} =o_\p(1)$  whenever the bandwidth $c$ satisfies $c \rightarrow \infty$ and $c^3/n \rightarrow 0$.
\end{Lemma}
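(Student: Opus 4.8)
The plan is to compare the feasible estimator with the infeasible lag--window estimator formed from the unobserved errors and the true segmentation. Put
\begin{align*}
\bar\sigma^2(t)=\sum_{l=-c}^{c}K(l/c)\,\bar\gamma_l(t),\qquad \bar\gamma_l(t)=\frac1n\sum_{j}\epsilon_j(t)\,\epsilon_{j+l}(t),
\end{align*}
and decompose $\hat\sigma^2-\sigma^2=(\hat\sigma^2-\bar\sigma^2)+(\bar\sigma^2-\sigma^2)$. It suffices to show that both differences are $o_\p(1)$ uniformly in $t\in[0,1]$.

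The oracle term $\bar\sigma^2-\sigma^2$ is classical. Under (A1)--(A4) the sequence $\{\epsilon_j\}$ is stationary and weakly dependent in $C[0,1]$ with absolutely summable autocovariance functions, so that $\sup_t\lvert\mathbb{E}\bar\sigma^2(t)-\sigma^2(t)\rvert\to 0$ as $c\to\infty$ (using $K(0)=1$, $\lvert K\rvert\le 1$ and dominated convergence), while the stochastic part satisfies $\sup_t\lvert\bar\sigma^2(t)-\mathbb{E}\bar\sigma^2(t)\rvert=O_\p(\sqrt{c/n})=o_\p(1)$; the uniformity in $t$ follows from the moment bounds in (A1) and the continuity of $t\mapsto\epsilon_j(t)$ by the same maximal--inequality arguments used for the functional long--run variance estimators in \cite{Aue2020}.

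For the perturbation term $\hat\sigma^2-\bar\sigma^2$ I would proceed in two substeps. First, replace the estimated change points by the true ones: by Theorem 4.1 of \cite{bastian2024multiplechangepointdetection} the localisation of the $\hat s_i$ is fast enough that the number $N_n$ of indices $j$ assigned to a wrong segment satisfies $c\,N_n/n=o_\p(1)$, and since re--centering a single index changes each $\hat\gamma_l$ by $O_\p(n^{-1})$ uniformly in $t$ (by Markov's inequality and $\mathbb{E}\norm{X_j}_\infty^2<\infty$), the induced change in $\hat\sigma^2$ is at most of order $c\,N_n/n=o_\p(1)$. Second, on the correct segments write $X_j-\hat\mu^{(j)}=\epsilon_j-(\hat\mu_i-\mu_i)$ with $\max_i\sup_t\lvert\hat\mu_i(t)-\mu_i(t)\rvert=O_\p(n^{-1/2})$ (functional law of large numbers on each segment under (A1)--(A4)). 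Expanding the product $(X_j-\hat\mu^{(j)})(X_{j+l}-\hat\mu^{(j)})$ yields the oracle term $\epsilon_j\epsilon_{j+l}$ plus cross terms involving $\hat\mu_i-\mu_i$ and a term $(\hat\mu_i-\mu_i)^2$; since $\hat\mu_i-\mu_i$ is itself a normalised sum of the $\epsilon_k$ over the segment, each of these is $O_\p(n^{-1})$ after averaging, so summing over the $2c+1$ lags contributes $O_\p(c/n)=o_\p(1)$. It remains to bound the terms coming from lags $l$ with $j$ and $j+l$ in different segments, where the jump $\mu_{i'}-\mu_i=O(1)$ enters rather than a centred error; the number of such pairs is at most of order $c$ per change point for each lag, and collecting these contributions across all lags is precisely where the bandwidth condition $c^3/n\to 0$ is invoked to obtain an $o_\p(1)$ bound.

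I expect the main obstacle to be the uniformity in the function argument $t$: every bound above has to hold for $\sup_{t\in[0,1]}$, which precludes a direct appeal to scalar HAC theory and forces one to use the maximal inequalities available under (A1)--(A4) as in \cite{Aue2020, dette2022detecting}. The second, more bookkeeping--type difficulty is tracking the boundary--crossing lag terms and the mean--estimation cross terms carefully enough that the stated condition $c^3/n\to 0$ (rather than merely $c/n\to 0$) is exactly what makes them negligible; once the change points and the segment means are replaced by their population counterparts, the remaining argument is the standard consistency proof for a lag--window estimator.
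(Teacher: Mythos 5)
Your overall architecture coincides with the paper's: reduce to the infeasible lag--window estimator built from the unobserved errors (the paper dismisses this reduction as ``straightforward approximation arguments'', which you flesh out via change-point localisation and the $O_\p(n^{-1/2})$ rate of the segment means), then split the oracle term into a bias part, controlled by summability of the autocovariances under the mixing condition (A4) (the paper cites Lemma A of \cite{Yoshi1978} for $\sup_t\sigma_l^2(t)\lesssim\sqrt{\phi(l)}$) plus a dominated-convergence argument for the kernel, and a stochastic part. The genuine problem lies in your treatment of the stochastic part and, as a consequence, in your accounting of where the hypothesis $c^3/n\to 0$ is actually needed. You assert $\sup_{t}\lvert\bar\sigma^2(t)-\mathbb{E}\bar\sigma^2(t)\rvert=O_\p(\sqrt{c/n})$, i.e.\ the classical scalar HAC rate for the entire kernel-weighted sum over lags. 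For function-valued data under the sup-norm this cannot simply be imported from scalar theory, and you offer no maximal inequality that would deliver it. What the paper actually establishes (via the arguments on p.~18 of the supplement of \cite{Aue2020} together with a union bound over lags) is the per-lag uniform bound $\max_{-c\le l\le c}\norm{\hat\sigma_l^2-\sigma_l^2}_\infty=O_\p(\sqrt{c/n})$, which is then summed over the $2c+1$ lags by the triangle inequality, giving $O_\p(c\sqrt{c/n})=O_\p(\sqrt{c^3/n})$. This cruder bound is exactly the reason the lemma is stated under $c^3/n\to 0$ rather than $c/n\to 0$; if your sharper rate were available, the lemma would hold under a weaker bandwidth condition, which is a strong hint that the claim is not obtainable by the methods at hand.

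Relatedly, your assertion that $c^3/n\to 0$ is ``precisely'' what controls the boundary-crossing lag terms is off: for each lag $l$ at most $\lvert l\rvert$ pairs $(j,j+l)$ straddle a given change point, so summing over $\lvert l\rvert\le c$ and over the finitely many change points yields a contribution of order $c^2/n$, which vanishes already under $c^2/n\to 0$. If you replace the unproven $O_\p(\sqrt{c/n})$ claim by the per-lag uniform bound plus the triangle inequality over lags, your proof closes and essentially reproduces the paper's; as written, the central stochastic estimate is unsupported and the role of the bandwidth condition is misattributed.
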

\begin{proof}
    We will show the result for mean zero processes $\epsilon_1,...,\epsilon_n$ satisfying Conditions (A1) to (A4) in \cite{dette2022detecting} and the estimator $ \hat \sigma^2=\sum_{l=-c}^c \hat \sigma_l^2K(l/c)$  where   
    \begin{align}
    \hat{\sigma}^2_l =
\begin{cases}
\frac{1}{n} \sum_{j=1}^{n-l} \epsilon_j\epsilon_{j+l}, & l \geq 0, \\
\frac{1}{n} \sum_{j=1-l}^{n}  \epsilon_j\epsilon_{j+l}, & l < 0,
\end{cases}
\end{align}
and we define $\sigma_l^2=\mathbb{E}[\epsilon_j\epsilon_{j+l}]$. The general result then follows by straightforward approximation arguments. Observe the following decomposition:
\begin{align*}
    \hat \sigma^2 -\sigma^2 &= - \sum_{|j|>c}\sigma_j^2+\sum_{j=-c}^c  \big ( \hat \sigma_j^2- \sigma_j^2  \big ) \\
    &- \sum_{j=-c}^c (\hat \sigma_j^2-\sigma^2_j)(1-K(j/c)) - \sum_{j=-c}^c\sigma_j^2(1-K(j/c))
\end{align*}
By Lemma A from \cite{Yoshi1978} we have that   $\sup_{t\in [0,1]} \sigma_j^2(t) \lesssim  \sqrt{\phi(j)}$ which is a summable sequence by (A4). 
Therefore the  first summand is of order $o(1)$. The fourth summand converges to $0$ by Lebesgue's Theorem. Finally, 
Arguments similar to those given  on page 18 of the supplement of \cite{Aue2020} and a union bound argument yield that    
$$
\max_{-c \leq j \leq c}\norm{\hat \sigma^2_j-\sigma^2_j}_\infty=O_\p(\sqrt{c/n}).
$$
This  implies that the second and third summand are of order $O_\p(\sqrt{c^3/n})$. 
\end{proof}

\section{Application to sports data}

\noindent
In this section, we explore the application of the methodology presented above to biomechanical knee angle data collected under a fatigue protocol obtained from the collaboration project \textit{Sports, Data, and Interaction\footnote{\url{http://www.sports-data-interaction.com/} }}. The data examples presented in this work are collected via marker-based optical motion capture systems placed in a laboratory while the athlete ran on a treadmill. 
Following relevant change detection to demarcate changes corresponding to fatigue we now study the range of motion of the athlete by means of the confidence bands we introduced. For a more detailed discussion of this dataset and the literature on fatigue-related change detection, we refer the interested reader to \cite{bastian2024multiplechangepointdetection} and \cite{basu2023fatiguedetectionsequentialtesting}. In all of the following data examples, Band $0$ corresponds to the confidence bands around the mean curve $\mu_0 $ i.e, the mean in the segments $[0, s_1]$, Band $1$ corresponds to mean $\mu_1$, i.e. to the segment $[s_{i_1}, s_{i_1+1}]$ and so on. 

\subsection{I: Confidence bands for knee angles}
\begin{figure}[H]
\centering
    \includegraphics[width=0.75\linewidth, height= 5.4cm]{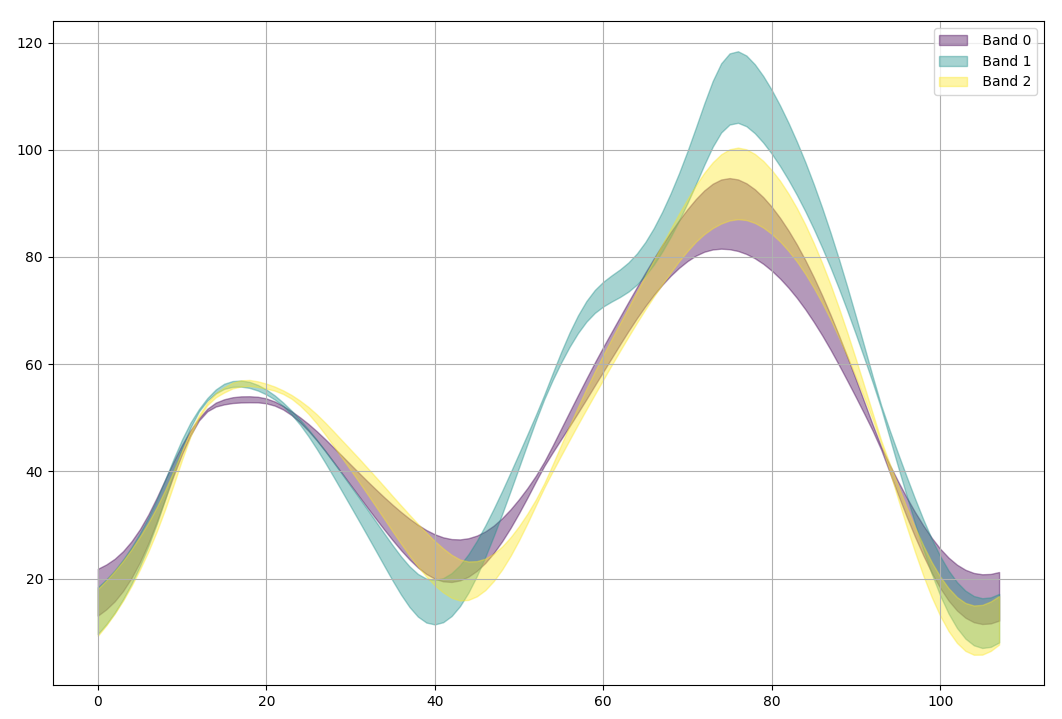}
    \caption{Confidence bands pre- and post- each change for right knee angles from runner A. Total sample size $n = 1830$ and three phases are detected for a relevant size $\Delta = 6.6.$}
    \label{fig:runner06CB}
\end{figure}

In Figure \ref{fig:runner06CB}, we present the confidence bands for a novice runner under fatiguing conditions. They have about 2 years of running experience and their dominant leg is the right leg. 
To determine the value of \( \Delta \) in a data-driven manner, we compute the average of the initial \( 5\% \) of the data, denoted as \( \hat{\mu}_{\text{initial}} \), and the average of the final \( 5\% \) of the data, denoted as \( \hat{\mu}_{\text{final}} \). Then, we set 
$\Delta = \frac{\lVert \hat{\mu}_{\text{final}} - \hat{\mu}_{\text{initial}} \rVert_\infty}{3}, $  similarly to \cite{bastian2024multiplechangepointdetection}. For the case of this runner,  we obtain $\Delta = 6.6.$ 
Band $0$ corresponds to the rest phase, Band $1$ corresponds to the pre-fatigue phase and Band $2$ finally corresponds to the fatigue phase. Notice that the intermediate shift upwards in the second peak of knee angles is an initial adjustment made by the runner to endure fatigue. Band $2$ indicates that the final adjustment corresponds to lesser bending of the knee while the foot is in the air, which is clearly reflected by the large jump in the second peak of the bands. The lower and upper bands at each part of the cycle provide useful information on the range of motion of the knee movement through different segments of the run. We stress that the exceedance threshold $\Delta$ eliminates all segments of non-fatigue related deviations in the knee angles.

\subsection{II: Comparison novice and experienced runner}
\begin{figure}[H]
\hspace{-.4cm}
\includegraphics[width=1.03\linewidth, height =5.5cm]{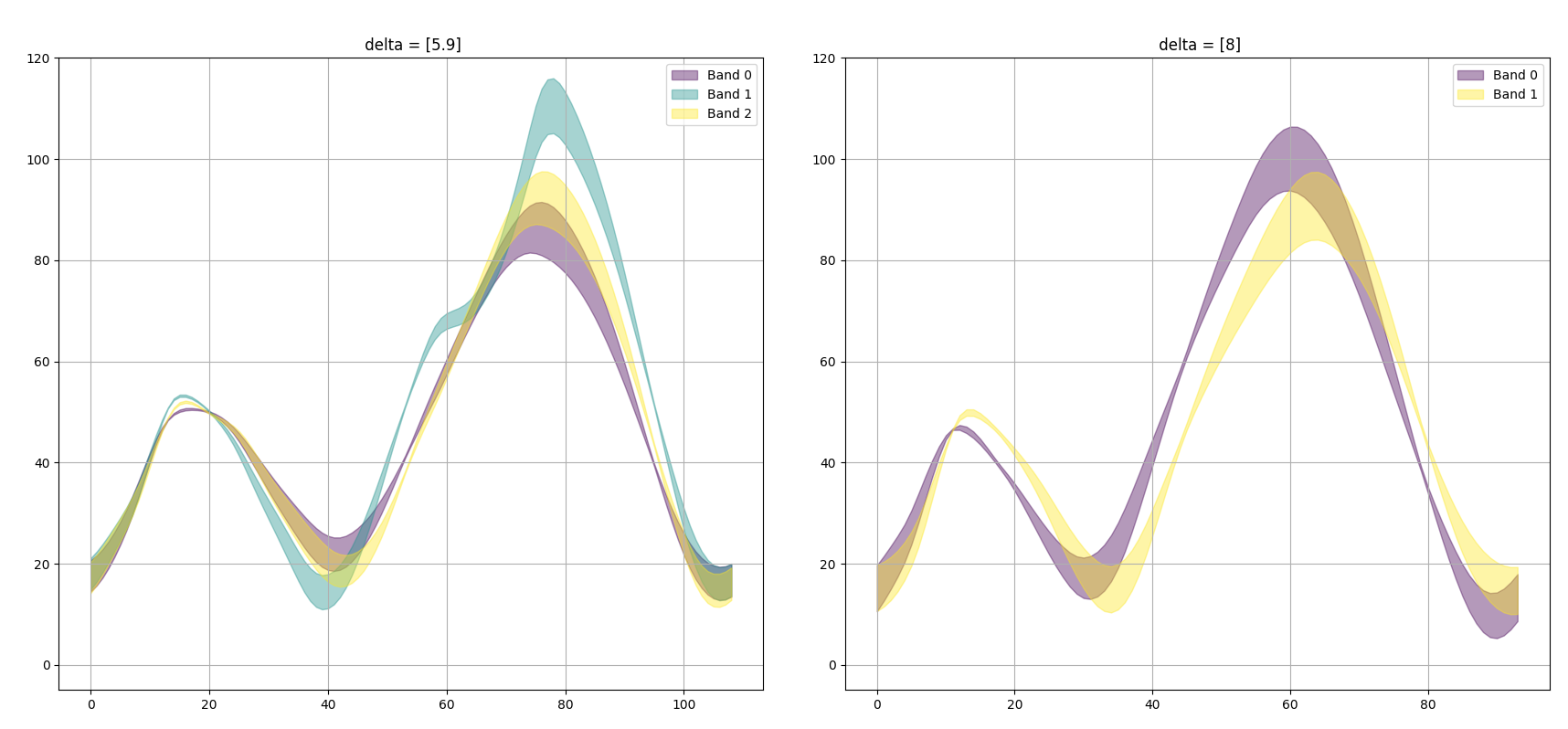}\caption{Confidence bands from two runners. \textbf{Left}: (Novice-) runner A  left knee angles for relevant size $\Delta = 5.9$ and \textbf{Right}: (experienced-) runner B left knee angles for relevant size $\Delta = 8$. }\label{fig:compare0609}
\end{figure}

In Figure \ref{fig:compare0609} we compare two runners with different experience levels. On the left we have the same runner A as in Figure \ref{fig:runner06CB} while on the right we have an experienced runner B with about 11 years of running experience and a dominant left leg.  This time we consider the left leg for both runners, notice that runner B has only one relevant change of size $\Delta = 8$. This may be due to B being well trained, i.e. their body is used to fatiguing conditions and therefore needs to make fewer adjustments to endure accumulated fatigue. However, just as for the less experienced runner A, we can see that in the fatigue phase there is reduced bending of the knee at the second peak. This behaviour is consistent across runners even beyond the two cases we present here and is in accordance with prior studies in the biomechanical gait literature, see for example \cite{zandbergen2023effects}, where this is attributed to a protection mechanism of the runner.    \\


\noindent
\textbf{Acknowledgments: } We thank R. van Middelaar and A. Balasubramaniam of the University of Twente., the Netherlands for the immense amount of work they put in providing high-quality data from running athletes.   The authors gratefully acknowledge the support by the Deutsche Forschungsgemeinschaft (DFG), project number 511905296 titled: \textit{Modeling functional time series with dynamic factor structures and points of impact} and  
 TRR 391 \textit{Spatio-temporal Statistics for the Transition of Energy and Transport}, project number 520388526.

\bibliographystyle{apalike}
\bibliography{bibliography}     
\newpage
\end{document}